\title{Exact synthesis of multiqubit Clifford+$T$ circuits}
\author{\begin{tabular}{c}
    Brett Giles \\[.5ex]
    \normalsize Department of Computer Science \\
    \normalsize University of Calgary
  \end{tabular}\begin{tabular}{c}
    Peter Selinger \\[.5ex]
    \normalsize Department of Mathematics and Statistics \\
    \normalsize Dalhousie University
  \end{tabular}
}
\date{\begin{minipage}{0.9\textwidth}\normalsize
  We prove that a unitary matrix has an exact representation over the
  Clifford+$T$ gate set with local ancillas if and only if its entries
  are in the ring $\Z[\frac1{\sqrt{2}},i]$. Moreover, we show that one
  ancilla always suffices. These facts were conjectured by
  Kliuchnikov, Maslov, and Mosca. We obtain an algorithm for
  synthesizing a exact Clifford+$T$ circuit from any such $n$-qubit
  operator. We also characterize the Clifford+$T$ operators that can
  be represented without ancillas.
\end{minipage}}
\begin{document}
\maketitle

\section{Introduction}

An important problem in quantum information theory is the
decomposition of arbitrary unitary operators into gates from some
fixed universal set {\cite{Nielsen-Chuang}}. Depending on the operator
to be decomposed, this may either be done exactly or to within some
given accuracy $\epsilon$; the former problem is known as {\em exact
  synthesis} and the latter as {\em approximate synthesis}
{\cite{Kliuchnikov-et-al}}. 

Here, we focus on the problem of exact synthesis for $n$-qubit
operators, using the Clifford+$T$ universal gate set. Recall that the
Clifford group on $n$ qubits is generated by the Hadamard gate $H$,
the phase gate $S$, the controlled-not gate, and the scalar
$\omega=e^{i\pi/4}$ (one may allow arbitrary unit scalars, but it is
not convenient for our purposes to do so). It is well-known that one
obtains a universal gate set by adding the non-Clifford operator $T$
{\cite{Nielsen-Chuang}}.
\begin{equation}\label{eqn-generators}
  \begin{array}{c}
  \displaystyle
  \omega = e^{i\pi/4},\sep
  H = \frac{1}{\sqrt{2}}\zmatrix{cc}{1&1\\1&-1},\sep 
  S = \zmatrix{cc}{1&0\\0&i},\\\\[-1.5ex]
  \displaystyle
  {\it CNOT} = \zmatrix{cccc}{1&0&0&0\\0&1&0&0\\0&0&0&1\\0&0&1&0},\sep
  T = \zmatrix{cc}{1&0\\0&e^{i\pi/4}}.
\end{array}
\end{equation}
In addition to the Clifford+$T$ group on $n$ qubits, as defined above,
we also consider the slightly larger group of Clifford+$T$ operators
``with ancillas''. We say that an $n$-qubit operator $U$ is a
Clifford+$T$ operator {\em with ancillas} if there exists $m\geq 0$
and a Clifford+$T$ operator $U'$ on $n+m$ qubits, such that
$U'(\ket{\phi}\otimes\ket{0})= (U\ket{\phi})\otimes\ket{0}$ for all
$n$-qubit states $\ket{\phi}$.

Kliuchnikov, Maslov, and Mosca {\cite{Kliuchnikov-et-al}} showed that
a single-qubit operator $U$ is in the Clifford+$T$ group if and only
if all of its matrix entries belong to the ring
$\Z[\frac1{\sqrt{2}},i]$. They also showed that the Clifford+$T$
groups ``with ancillas'' and ``without ancillas'' coincide for $n=1$,
but not for $n\geq 2$.  Moreover, Kliuchnikov et al.~conjectured that
for all $n$, an $n$-qubit operator $U$ is in the Clifford+$T$ group
with ancillas if and only if its matrix entries belong to
$\Z[\frac1{\sqrt{2}},i]$. They also conjectured that a single ancilla
qubit is always sufficient in the representation of a Clifford+$T$
operator with ancillas.  The purpose of this paper is to prove these
conjectures. In particular, this yields an algorithm for exact
Clifford+$T$ synthesis of $n$-qubit operators. We also obtain a
characterization of the Clifford+$T$ group on $n$ qubits without
ancillas.

It is important to note that, unlike in the single-qubit case, the
circuit synthesized here are not in any sense canonical, and very far
from optimal. Thus, the question of {\em efficient} synthesis is not
addressed here.

\section{Statement of the main result}

Consider the ring $\Z[\frac1{\sqrt{2}},i]$, consisting of complex
numbers of the form
\[  \frac{1}{2^n} (a + bi + c\sqrt{2} + di\sqrt{2}),
\]
where $n\in\N$ and $a,b,c,d\in\Z$. Our goal is to prove the following
theorem, which was conjectured by Kliuchnikov et
al.~{\cite{Kliuchnikov-et-al}}:
\begin{theorem}\label{thm-main}
  Let $U$ be a unitary $2^n\times 2^n$ matrix. Then the following are
  equivalent:
\begin{enumerate}\alphalabels
\item[(a)] $U$ can be exactly represented by a quantum circuit over
  the Clifford+$T$ gate set, possibly using some finite number of
  ancillas that are initialized and finalized in state $\ket0$.
\item[(b)] The entries of $U$ belong to the ring
  $\Z[\frac1{\sqrt{2}},i]$.
\end{enumerate}
Moreover, in (a), a single ancilla is always sufficient.
\end{theorem}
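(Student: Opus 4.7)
For the direction (a) $\Rightarrow$ (b), I would observe that each generator in~(\ref{eqn-generators}) has entries in $\Z[\frac{1}{\sqrt 2},i]$, and that this ring is closed under the three operations used in circuit composition: matrix multiplication, tensor product with the identity (to promote a few-qubit gate to an $n$-qubit register), and the passage $U'\mapsto(I\otimes\bra 0)U'(I\otimes\ket 0)$ that extracts the action on the non-ancilla qubits when the ancilla is initialized and finalized in $\ket 0$. Hence the entries of any Clifford+$T$ operator with ancillas lie in $\Z[\frac{1}{\sqrt 2},i]$.

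The substance lies in (b) $\Rightarrow$ (a). My plan is to reduce this to a state-preparation lemma: for every unit column vector $\ket\psi\in\Z[\frac{1}{\sqrt 2},i]^{2^n}$ there exists a Clifford+$T$ circuit on $n+1$ qubits, using a single ancilla initialized and finalized in $\ket 0$, that maps $\ket\psi$ to $\ket{0\ldots 0}$. Granting this, the full matrix $U$ is synthesized column by column: the circuit that sends $U\ket{0\ldots 0}$ to $\ket{0\ldots 0}$ yields $C_1U$ of block form $1\oplus U_1$, and the same lemma is then applied, after conjugation by basis-permuting $X$ and $CNOT$ gates, to the remaining columns in turn, each reduction preserving the previously fixed basis vectors. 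Because the single ancilla is reset to $\ket 0$ at the end of every step, it can be reused across all columns, yielding the ``one ancilla suffices'' conclusion.

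To prove the state-preparation lemma I would induct on the \emph{denominator exponent} $k$: write $\ket\psi=\frac{1}{\sqrt{2^k}}\sum_x v_x\ket x$ with $v_x\in\Z[i]$ and $k$ minimal. For $k=0$ the unit-norm condition forces a single $v_x$ to be a unit of $\Z[i]$ and the rest to vanish, so $\ket\psi$ is a computational basis state up to a scalar in $\{\pm 1,\pm i\}$, and is reduced to $\ket{0\ldots 0}$ by $X$ gates together with powers of $\omega$. For the inductive step I would exhibit a Clifford+$T$ operator $V$ (possibly using the ancilla) such that $V\ket\psi$ still has entries in $\Z[\frac{1}{\sqrt 2},i]$ but strictly smaller denominator exponent. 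The unit-norm condition $\sum|v_x|^2=2^k$ forces, for $k\geq 1$, the set $S=\{x:v_x\notin(1+i)\Z[i]\}$ of ``odd-parity'' positions to have even cardinality. The strategy is then to apply $CNOT$ gates (which permute the $v_x$ by affine maps of $\Z_2^n$) to align $S$ with the level sets of a chosen qubit, then apply an $H$ on that qubit to pair each odd $v_x$ with one of matching residue, producing a factor of $\sqrt 2$ that cancels against the denominator; a single $T$ gate chosen according to $|v_x|^2\pmod 4$ absorbs residual phases.

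The main obstacle is precisely this inductive step. In the single-qubit case of Kliuchnikov et al.\ there is only one non-trivial parity pattern and a single Hadamard already reduces $k$; here $S$ may be an arbitrary even-sized subset of $\{0,1\}^n$ and it is not immediate that $CNOT$-and-$H$ manipulations always suffice. I expect the ancilla to play an essential role, allowing the parity pattern to be ``unfolded'' across one extra qubit so that any obstruction splits into two sub-patterns each of which is reducible on its own. Formulating this reduction step uniformly and proving termination is the technical heart of the proof; the matrix-level synthesis and the single-ancilla bound then follow from the framework above.
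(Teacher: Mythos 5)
Your high-level architecture coincides with the paper's: (a)$\imp$(b) by ring closure, and (b)$\imp$(a) by reducing columns one at a time via induction on the denominator exponent, using a parity argument to pair up the ``odd'' entries. But the technical heart of the argument --- the inductive step that actually lowers $k$ --- is exactly what you leave open, and it is also where your setup has a flaw. You write $\ket\psi=\frac{1}{\sqrt{2^k}}\sum_x v_x\ket x$ with $v_x\in\Z[i]$, but elements of $\Z[\frac1{\sqrt2},i]$ do not in general have this form for any $k$: for example $\omega+1=(\sqrt2-1)+\ldots$, and more concretely the entry $(-\omega^3+\omega-1)/\sqrt2^3=(\sqrt2-1)/\sqrt2^3$ of the paper's Example~1 is not $v/\sqrt{2^j}$ with $v\in\Z[i]$ for any $j$. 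The correct numerator ring is $\Zw=\Z[\omega]$, and consequently the parity analysis must be carried out in the $16$-element quotient $\Zbw=\Zw/2\Zw$ rather than $\Z[i]/(1+i)$. In that quotient the residue norms $x\da x$ take three values ($0000$, $0001$, $1010$), and one needs the finer fact that the entries with $x\da x=0001$ come in pairs \emph{and} those with $x\da x=1010$ come in pairs; your single even-cardinality set $S$ is too coarse, since pairing entries of unequal residue norm does not reduce.

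The missing idea that closes the gap is the paper's Row Lemma: a finite check over $\Zbw$ showing that any \emph{two} components $u_1,u_2$ with equal residue norms can be jointly reduced by one or two rounds of $HT^m$ applied as a \emph{two-level} operation on that pair of indices --- the residues of matching norm are cyclic rotations of one another (or become so after one round), so $T^m$ aligns them and $H$ then produces the factor of $\sqrt2$. This removes the obstruction you worry about: there is no need to use $\CNOT$s to align $S$ with the level sets of a qubit, nor to use the ancilla to ``unfold'' the parity pattern. One simply works with one- and two-level matrices on arbitrary index pairs $\s{\jay,\ell}$, reduces the column to $e_1$, recurses on the remaining $(2^n-1)\times(2^n-1)$ block, and only at the very end implements each two-level operation as a multiply-controlled $X$, $H$, $T$, or $\omega$ gate via Gray codes; the single ancilla enters only in realizing those multiply-controlled gates (via ancilla-free multiply-controlled $iX$). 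So your plan is recoverable, but as written it both rests on a false parametrization and defers precisely the step that constitutes the proof.
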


\section{Some algebra}

We first introduce some notation and terminology, following
{\cite{Kliuchnikov-et-al}} where possible. Recall that $\N$ is the set
of natural numbers including 0, and $\Z$ is the ring of \linebreak 
integers. We write $\Zb=\Z/2\Z$ for the ring of integers modulo 2. Let
$\D$ be the ring of {\em 
  dyadic fractions}, defined as $\D = \Z[\frac12] =
\s{\frac{a}{2^n}\mid a\in\Z, n\in\N}$.

Let $\omega = e^{i\pi/4} = (1+i)/\sqrt{2}$. Note that $\omega$ is an
8th root of unity satisfying $\omega^2=i$ and $\omega^4=-1$. We will
consider three different rings related to $\omega$:

\begin{definition}
  Consider the following rings. Note that the first two are subrings
  of the complex numbers, and the third one is not:
  \begin{itemize}
  \item $\Dw = \s{a\omega^3+b\omega^2+c\omega+d \mid a,b,c,d\in\D}$.
  \item $\Zw = \s{a\omega^3+b\omega^2+c\omega+d \mid a,b,c,d\in\Z}$.
  \item $\Zbw = \s{p\omega^3+q\omega^2+r\omega+s \mid p,q,r,s\in\Zb}$.
  \end{itemize}
  Note that the ring $\Zbw$ only has 16 elements. The laws of addition
  and multiplication are uniquely determined by the ring axioms and
  the property $\omega^4=1\ (\mymod 2)$. We call the elements of
  $\Zbw$ {\em residues} (more precisely, residue classes of $\Zw$
  modulo 2).
\end{definition}

\begin{remark}
  The ring $\Dw$ is the same as the ring $\Z[\frac1{\sqrt{2}},i]$
  mentioned in the statement of Theorem~\ref{thm-main}. However, as
  already pointed out in {\cite{Kliuchnikov-et-al}}, the formulation
  in terms of $\omega$ is far more convenient algebraically.
\end{remark}

\begin{remark}
  The ring $\Zw$ is also called the {\em ring of algebraic integers}
  of $\Dw$. It has an intrinsic definition, i.e., one that is
  independent of the particular presentation of $\Dw$. Namely, a
  complex number is called an {\em algebraic integer} if it is the
  root of some polynomial with integer coefficients and leading
  coefficient 1. It follows that $\omega$, $i$, and $\sqrt{2}$ are
  algebraic integers, whereas, for example, $1/\sqrt{2}$ is not. The
  ring $\Zw$ then consists of precisely those elements of $\Dw$ that
  are algebraic integers.
\end{remark}

\subsection{Conjugate and norm}

\begin{remark}[Complex conjugate and norm]
  Since $\Dw$ and $\Zw$ are subrings of the complex numbers, they
  inherit the usual notion of complex conjugation. We note that
  $\omega\da = -\omega^3$. This yields the following formula:
  \begin{equation}\label{eqn-adjoint}
    (a\omega^3+b\omega^2+c\omega+d)\da = -c\omega^3-b\omega^2-a\omega+d.
  \end{equation}
  Similarly, the sets $\Dw$ and $\Zw$ inherit the usual norm from the complex
  numbers. It is given by the following explicit formula, for
  $t=a\omega^3+b\omega^2+c\omega+d$:
  \begin{equation}\label{eqn-cplx-norm}
    \snorm{t}^2 = t\da t = (a^2+b^2+c^2+d^2) + (cd+bc+ab-da) \sqrt{2}.
  \end{equation}
\end{remark}

\begin{definition}[Weight]
  For $t\in\Dw$ or $t\in\Zw$, the {\em weight} of $t$ is denoted 
  $\sweight{t}$, and is given by:
  \begin{equation}\label{eqn-weight}
    \sweight{t}^2 = a^2+b^2+c^2+d^2.
  \end{equation}
\end{definition}

Note that the square of the norm is valued in $\D[\sqrt{2}]$, whereas
the square of the weight is valued in $\D$.  We also extend the
definition of norm and weight to vectors in the obvious way: For $u =
(u_\jay)_\jay$, we define
\[ \norm{u}^2 = \sum_\jay\norm{u_\jay}^2
\quad\mbox{and}\quad
\weight{u}^2 = \sum_\jay\weight{u_\jay}^2.
\]

\begin{lemma}\label{lem-ring-norm}
  Consider a vector $u\in\Dw^n$. If $\norm{u}^2$ is an integer, then
  $\weight{u}^2=\norm{u}^2$.
\end{lemma}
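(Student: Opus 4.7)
The plan is to expand $\norm{u}^2$ explicitly and isolate its irrational part. Although formula~\eqref{eqn-cplx-norm} is stated for $t\in\Zw$, the same derivation works verbatim for any $t = a\omega^3 + b\omega^2 + c\omega + d \in \Dw$ with $a,b,c,d\in\D$, yielding
\[
\snorm{t}^2 = (a^2+b^2+c^2+d^2) + (cd + bc + ab - da)\sqrt{2}.
\]
Writing each coordinate $u_\jay$ of $u$ as $a_\jay\omega^3 + b_\jay\omega^2 + c_\jay\omega + d_\jay$ with $a_\jay,b_\jay,c_\jay,d_\jay\in\D$ and summing over $\jay$, I would obtain $\norm{u}^2 = A + B\sqrt{2}$, where $A = \sum_\jay (a_\jay^2+b_\jay^2+c_\jay^2+d_\jay^2)$, which is precisely $\weight{u}^2$ by the definition of the weight, and $B = \sum_\jay (c_\jay d_\jay + b_\jay c_\jay + a_\jay b_\jay - d_\jay a_\jay) \in \D$.

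To finish, I would invoke the irrationality of $\sqrt{2}$: every element of $\Ds$ has a unique representation as $x + y\sqrt{2}$ with $x,y \in \D$. In particular, if $A + B\sqrt{2}$ is equal to an integer, then the $\sqrt{2}$-component must vanish, i.e.\ $B = 0$, which forces $\norm{u}^2 = A = \weight{u}^2$. There is no real obstacle here; the whole argument reduces to a single explicit expansion together with the standard fact that $\Ds$ is a free $\D$-module with basis $\{1, \sqrt{2}\}$. I expect the lemma to serve later as a bridge between the $\Ds$-valued quantity $\norm{u}^2$ (which is natural from the complex-analytic viewpoint) and the $\D$-valued quantity $\weight{u}^2$ (which is the combinatorial invariant actually tracked during the synthesis argument), so the real work lies not in this lemma but in exploiting it downstream.
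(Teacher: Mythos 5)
Your proposal is correct and is essentially the paper's own argument: the paper likewise observes that $\weight{u}^2$ is the dyadic part of $\norm{u}^2$ in the unique representation $a+b\sqrt{2}$ with $a,b\in\D$, so the two coincide when $\norm{u}^2$ is an integer. Your version merely spells out the expansion and the uniqueness step explicitly.
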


\begin{proof}
  Any $t\in\D[\sqrt{2}]$ can be uniquely written as $t=a+b\sqrt{2}$,
  where $a,b\in\D$. We can call $a$ the {\em dyadic part} of $t$. Now
  the claim is obvious, because $\weight{u}^2$ is exactly the dyadic
  part of $\norm{u}^2$.
\end{proof}

\subsection{Denominator exponents}

\begin{definition}
  Let $t\in\Dw$. A natural number $k\in\N$ is called a {\em
    denominator exponent} for $t$ if $\sqrt{2}^k t \in \Zw$.  It is
  obvious that such $k$ always exists. The least such $k$ is called
  the {\em least denominator exponent} of $t$.

  More generally, we say that $k$ is a denominator exponent for a
  vector or matrix if it is a denominator exponent for all of its
  entries. The least denominator exponent for a vector or matrix is
  therefore the least $k$ that is a denominator exponent for all of
  its entries.
\end{definition}

\begin{remark}
  Our notion of least denominator exponent is almost the same as the
  ``smallest denominator exponent'' of {\cite{Kliuchnikov-et-al}},
  except that we do not permit $k<0$.
\end{remark}

\subsection{Residues}

\begin{remark}
  The ring $\Zbw$ is not a subring of the complex numbers; rather, it
  is a quotient of the ring $\Zw$. Indeed, consider the {\em parity
    function} $\parity{()}:\Z\to\Zb$, which is the unique ring
  homomorphism. It satisfies $\parity{a}=0$ if $a$ is even and
  $\parity{a}=1$ if $a$ is odd. The parity map induces a surjective
  ring homomorphism $\rho:\Zw\to\Zbw$, defined by
  \[ \rho(a\omega^3+b\omega^2+c\omega+d) =
  \parity{a}\omega^3+\parity{b}\omega^2+\parity{c}\omega+\parity{d}.
  \]
  We call $\rho$ the {\em residue map}, and we call $\rho(t)$ the {\em
    residue} of $t$.
\end{remark}

\begin{convention}
  Since residues will be important for the constructions of this
  paper, we introduce a shortcut notation, writing each residue
  $p\omega^3+q\omega^2+r\omega+s$ as a string of binary digits $pqrs$.
\end{convention}

What makes residues useful for our purposes is that many important
operations on $\Zw$ are well-defined on residues.  Here, we say that
an operation $f:\Zw\to\Zw$ is {\em well-defined on residues} if for
all $t,s$, $\rho(t)=\rho(s)$ implies $\rho(f(t))=\rho(f(s))$.

For example, two operations that are obviously well-defined on
residues are complex conjugation, which takes the form $(pqrs)\da =
rqps$ by (\ref{eqn-adjoint}), and multiplication by $\omega$, which is
just a cyclic shift $\omega(pqrs)=qrsp$. Table~\ref{tab-residue} shows
two other important operations on residues, namely multiplication by
$\sqrt{2}$ and the squared norm.

\begin{table}
  \[ \begin{array}{c|c|c} \rho(t) & \rho(\sqrt{2}\,t) & \rho(t\da t)\\\hline
    0000 & 0000 & 0000 \\
    0001 & 1010 & 0001 \\
    0010 & 0101 & 0001 \\
    0011 & 1111 & 1010 \\

    0100 & 1010 & 0001 \\
    0101 & 0000 & 0000 \\
    0110 & 1111 & 1010 \\
    0111 & 0101 & 0001 \\
  \end{array}\qquad
  \begin{array}{c|c|c} \rho(t) & \rho(\sqrt{2}\,t) & \rho(t\da t)\\\hline
    1000 & 0101 & 0001 \\
    1001 & 1111 & 1010 \\
    1010 & 0000 & 0000 \\
    1011 & 1010 & 0001 \\

    1100 & 1111 & 1010 \\
    1101 & 0101 & 0001 \\
    1110 & 1010 & 0001 \\
    1111 & 0000 & 0000 \\
  \end{array}
  \]
  \caption{Some operations on residues}\label{tab-residue}
\end{table}

\begin{definition}[$k$-Residue]
  Let $t\in\Dw$ and let $k$ be a (not necessarily least) denominator
  exponent for $t$.  The {\em $k$-residue of $t$}, in symbols
  $\rho_k(t)$, is defined to be
  \[ \rho_k(t) = \rho(\sqrt{2}^k t).
  \]
\end{definition}

\begin{definition}[Reducibility]
  We say that a residue $x\in\Zbw$ is {\em reducible} if it is of the
  form $\sqrt{2}\,y$, for some $y\in\Zbw$. Moreover, we say that
  $x\in\Zbw$ is {\em twice reducible} if it is of the form $2y$, for
  some $y\in\Zbw$.
\end{definition}

\begin{lemma}\label{lem-reducible}
  For a residue $x$, the following are equivalent:
  \begin{enumerate}\alphalabels
  \item $x$ is reducible;
  \item $x\in\s{0000,0101,1010,1111}$;
  \item $\sqrt{2}\,x = 0000$;
  \item $x\da x=0000$.
  \end{enumerate}
  Moreover, $x$ is twice reducible iff $x=0000$.
\end{lemma}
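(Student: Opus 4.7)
The plan is to reduce the entire lemma to inspection of Table~\ref{tab-residue}, which already tabulates $\rho(\sqrt{2}\,t)$ and $\rho(t\da t)$ for each of the sixteen elements $t \in \Zbw$. Since $\Zbw$ is finite, no genuinely abstract argument is required; the content of the lemma is to identify which rows satisfy each condition.

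First, for (a) $\Leftrightarrow$ (b), I would observe that, by definition, $x$ is reducible iff it lies in the image of the (well-defined) map $y \mapsto \sqrt{2}\,y$ on $\Zbw$. Scanning the $\rho(\sqrt{2}\,t)$ column of Table~\ref{tab-residue}, the only values that occur are $0000$, $0101$, $1010$, $1111$, and each of these does occur (for instance, $0000 = \sqrt{2}\cdot 0000$, $1010 = \sqrt{2}\cdot 0001$, $0101 = \sqrt{2}\cdot 0010$, $1111 = \sqrt{2}\cdot 0011$). So the image of multiplication by $\sqrt{2}$ is precisely the set in (b).

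Next, for (b) $\Leftrightarrow$ (c) and (b) $\Leftrightarrow$ (d), I would simply read off the rows of Table~\ref{tab-residue} whose entry in the $\rho(\sqrt{2}\,t)$ column, respectively the $\rho(t\da t)$ column, is $0000$. In both cases these turn out to be exactly the rows indexed by $t \in \{0000, 0101, 1010, 1111\}$, matching (b). One can also see (a) $\imp$ (c) abstractly, since $\Zbw$ has characteristic~$2$ and $(\sqrt{2})^2 = 2 = 0$ in this ring; but the table suffices.

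Finally, for the ``moreover'' part, since the coefficient ring $\Zb$ has characteristic~$2$, we have $2y = 0$ for every $y \in \Zbw$. Hence $x$ is twice reducible iff $x = 0000$. The main ``obstacle'' here is purely notational: the lemma is a finite verification made transparent by Table~\ref{tab-residue}, and there is no real difficulty once the table is accepted.
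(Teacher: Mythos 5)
Your proposal is correct and matches the paper's approach: the paper's entire proof is ``By inspection of Table~\ref{tab-residue}'', and you carry out exactly that inspection (with the harmless extra observations that $2=0$ in $\Zbw$ and that $(\sqrt{2})^2=0$ there). No issues.
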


\begin{proof}
  By inspection of Table~\ref{tab-residue}.
\end{proof}

\begin{lemma}\label{lem-reducible2}
  Let $t\in\Zw$. Then $t/2\in\Zw$ if and only if $\rho(t)$ is twice
  reducible, and $t/\sqrt{2}\in\Zw$ if and only if $\rho(t)$ is
  reducible.
\end{lemma}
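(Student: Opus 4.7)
The plan is to leverage the fact that $\{1,\omega,\omega^2,\omega^3\}$ is a $\Z$-basis of $\Zw$, so that if $t=a\omega^3+b\omega^2+c\omega+d$ with $a,b,c,d\in\Z$, the coefficients $a,b,c,d$ are uniquely determined by $t$. This holds because the minimal polynomial of $\omega$ over $\Q$ is $x^4+1$, which has degree~$4$. Once uniqueness is in hand, both equivalences reduce to short computations using Lemma~\ref{lem-reducible}.

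For the first equivalence, I would argue that $t/2\in\Zw$ iff all four coefficients $a,b,c,d$ are even. Indeed, if $t/2=a'\omega^3+b'\omega^2+c'\omega+d'$ with $a',b',c',d'\in\Z$, then multiplying by $2$ and invoking uniqueness gives $a=2a'$, $b=2b'$, and similarly for $c,d$; the converse is trivial. By definition of $\rho$, the condition that $a,b,c,d$ are all even is exactly $\rho(t)=0000$, and the ``moreover'' clause of Lemma~\ref{lem-reducible} identifies this with $\rho(t)$ being twice reducible.

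For the second equivalence, I would use the identity $1/\sqrt{2}=\sqrt{2}/2$ to rewrite $t/\sqrt{2}=(\sqrt{2}\,t)/2$. Since $\sqrt{2}=\omega-\omega^3\in\Zw$, the element $\sqrt{2}\,t$ itself lies in $\Zw$, so the first part of the lemma (just established) applies to $\sqrt{2}\,t$ and yields $t/\sqrt{2}\in\Zw$ iff $\rho(\sqrt{2}\,t)=0000$. Because $\rho$ is a ring homomorphism, $\rho(\sqrt{2}\,t)=\sqrt{2}\,\rho(t)$, and Lemma~\ref{lem-reducible} records that $\sqrt{2}\,\rho(t)=0000$ holds iff $\rho(t)$ is reducible. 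The only genuine content in the proof is the uniqueness of the $\Z$-basis representation of elements of $\Zw$; everything else is bookkeeping, so I do not anticipate any real obstacle.
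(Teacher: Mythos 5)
Your proof is correct and follows essentially the same route as the paper: the first equivalence is a direct coefficient-parity argument (the paper calls it ``trivial''; you supply the $\Z$-basis uniqueness justification), and the second is reduced to the first via multiplication by $\sqrt{2}$ together with the equivalence $\sqrt{2}\,x=0000$ iff $x$ reducible from Lemma~\ref{lem-reducible}. Your use of $t/\sqrt{2}=(\sqrt{2}\,t)/2$ packages both directions of the second claim at once, which is a minor streamlining of the paper's two-directional argument, not a different method.
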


\begin{proof}
  The first claim is trivial, as $\rho(t)=0000$ if and only if all
  components of $t$ are even. For the second claim, the left-to-right
  implication is also trivial: assume $t'=t/\sqrt{2}\in\Zw$. Then
  $\rho(t) = \rho(\sqrt{2}\,t')$, which is reducible by
  definition. Conversely, let $t\in\Zw$ and assume that $\rho(t)$ is
  reducible. Then $\rho(t)\in\s{0000, 0101, 1010, 1111}$, and it
  can be seen from Table~\ref{tab-residue} that
  $\rho(\sqrt{2}\,t)=0000$. Therefore, $\sqrt{2}\,t$ is twice reducible by
  the first claim; hence $t$ is reducible.
\end{proof}

\begin{corollary}\label{cor-reducible3}
  Let $t\in\Dw$ and let $k>0$ be a denominator exponent for $t$. Then
  $k$ is the least denominator exponent for $t$ if and only if
  $\rho_k(t)$ is irreducible.
\end{corollary}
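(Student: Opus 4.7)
The plan is to reduce the corollary directly to Lemma~\ref{lem-reducible2}. Let $t'=\sqrt{2}^k t$, which lies in $\Zw$ because $k$ is a denominator exponent, so $\rho_k(t)=\rho(t')$ by definition of the $k$-residue.

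First I would observe that the set of denominator exponents of $t$ is upward closed in $\N$: since $\sqrt{2}=\omega-\omega^3\in\Zw$, multiplying by $\sqrt{2}$ preserves $\Zw$, so if $\sqrt{2}^j t\in\Zw$ then $\sqrt{2}^{j+1}t\in\Zw$. Consequently, the least denominator exponent is simply the minimum of this upward-closed set, and under the hypothesis $k>0$ we have that $k$ is the least denominator exponent if and only if $k-1$ is not a denominator exponent, i.e., $\sqrt{2}^{k-1}t\notin\Zw$. Rewriting, this says $t'/\sqrt{2}\notin\Zw$.

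Next I would apply the second half of Lemma~\ref{lem-reducible2} to $t'\in\Zw$: it states that $t'/\sqrt{2}\in\Zw$ iff $\rho(t')$ is reducible. Negating both sides, $t'/\sqrt{2}\notin\Zw$ iff $\rho(t')=\rho_k(t)$ is irreducible. Chaining the two equivalences gives exactly the statement of the corollary.

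There is no real obstacle here; the only subtlety is making sure the hypothesis $k>0$ is used precisely where $k-1$ must be a legitimate element of $\N$, and confirming the upward-closedness of denominator exponents so that ``not the least'' coincides with ``$k-1$ also works.'' The rest is a direct invocation of Lemma~\ref{lem-reducible2}.
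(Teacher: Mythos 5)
Your proof is correct and follows essentially the same route as the paper: both reduce the claim to Lemma~\ref{lem-reducible2} applied to $\sqrt{2}^k t$, using that $k$ is least iff $\sqrt{2}^{k-1}t\notin\Zw$. Your explicit remark on the upward-closedness of denominator exponents is a detail the paper leaves implicit, but the argument is the same.
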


\begin{proof}
  Since $k$ is a denominator exponent for $t$, we have $\sqrt{2}^k t
  \in \Zw$. Moreover, $k$ is least if and only if $\sqrt{2}^{k-1}
  t\not\in\Zw$. By Lemma~\ref{lem-reducible2}, this is the case if and
  only if $\rho(\sqrt{2}^k t)=\rho_k(t)$ is irreducible.
\end{proof}

\begin{definition}
  The notions of residue, $k$-residue, reducibility, and
  twice-reducibility all extend in an obvious componentwise way to
  vectors and matrices. Thus, the residue $\rho(u)$ of a vector or
  matrix $u$ is obtained by taking the residue of each of its entries,
  and similar for $k$-residues. Also, we say that a vector or matrix
  is reducible if each of its entries is reducible, and similarly for
  twice-reducibility.
\end{definition}

\begin{example}\label{exa-k-residue}
  Consider the matrix
  \[ \small U \,{=}\, \frac{\small 1}{\small \sqrt{2}^3}\!\footnotesize\zmatrix{cccc}{
    -\omega^3+\omega -1          
			& \omega^2+\omega+1 
				& \omega^2 
					& -\omega\\
    \omega^2+\omega 
			& -\omega^3+\omega^2 
				& -\omega^2-1
					& \omega^3+\omega \\
    \omega^3+\omega^2
			& -\omega^3-1
				& 2\omega^2 
					&0 \\
    -1
			& \omega
				& 1 
					& -\omega^3+2\omega
    }.
  \]
  It has least denominator exponent $3$. Its $3$-, $4$-, and
  $5$-residues are:
  \[ \begin{split}
    &
    \rho_3(U) = \zmatrix{cccc}{
      1011 & 0111 & 0100 & 0010\\
      0110 & 1100 & 0101 & 1010\\
      1100 & 1001 & 0000 & 0000\\
      0001 & 0010 & 0001 & 1000
    },
    \\&
  \rho_4(U) = \zmatrix{cccc}{
      1010 & 0101 & 1010 & 0101\\
      1111 & 1111 & 0000 & 0000\\
      1111 & 1111 & 0000 & 0000\\
      1010 & 0101 & 1010 & 0101
    },\quad
  \rho_5(U) = 0.
  \end{split}
  \]
\end{example}

\section{Decomposition into two-level matrices}

Recall that a {\em two-level matrix} is an $n\times n$-matrix that
acts non-trivially on at most two vector components
{\cite{Nielsen-Chuang}}. If
\[U=\zmatrix{cc}{a&b\\c&d}
\]
is a $2\times 2$-matrix and $\jay\neq\ell$, we write $U\level{\jay,\ell}$ for the
two-level $n\times n$-matrix defined by
\[
  U\level{\jay,\ell}=\begin{blockarray}{cccccc}
    &\matindex{\cdots}& \matindex{\jay} &\matindex{\cdots}& \matindex{\ell} &\matindex{\cdots} \\
    \begin{block}{c(c|c|c|c|c)}
      \matindex{\vdots} & \bigI &&&& \\\cline{2-6}
      \matindex{\jay} & & a && b \\\cline{2-6}
      \matindex{\vdots} & && \bigI && \\\cline{2-6}
      \matindex{\ell} & & c && d &  \\\cline{2-6}
      \matindex{\vdots} &&&&& \bigI \\
    \end{block}
  \end{blockarray}\,,
\]
and we say that $U\level{\jay,\ell}$ is a two-level matrix {\em of type
  $U$}. Similarly, if $a$ is a scalar, we write
$a\level{\jay}$ for the one-level matrix
\[
 a\level{\jay} = \begin{blockarray}{cccc}
    &\matindex{\cdots}& \matindex{\jay} &\matindex{\cdots} \\
    \begin{block}{c(c|c|c)}
      \matindex{\vdots} & \bigI && \\\cline{2-4}
      \matindex{\jay} & & a & \\\cline{2-4}
      \matindex{\vdots} & && \bigI\\
    \end{block}
  \end{blockarray}\,,
\]
and we say that $a\level{\jay}$  is a one-level matrix {\em of type $a$}.

\begin{lemma}[Row operation]\label{lem-row}
  Let $u=(u_1,u_2)^T\in\Dw^2$ be a vector with denominator exponent
  $k>0$ and $k$-residue $\rho_k(u)=(x_1,x_2)$, such that $x_1\da x_1 =
  x_2\da x_2$.  Then there exists a sequence of matrices
  $U_1,\ldots,U_h$, each of which is $H$ or $T$, such that
  $v=U_1\cdots U_hu$ has denominator exponent $k-1$, or equivalently,
  $\rho_k(v)$ is defined and reducible.
\end{lemma}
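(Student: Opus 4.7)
My plan is a case analysis on the common value $N := x_1\da x_1 = x_2\da x_2$, which by Table~\ref{tab-residue} lies in $\s{0000, 0001, 1010}$.  I would first record three facts that follow by inspection of that table: the set $\s{x:x\da x=1010}$ is a single $\omega$-orbit $\mathcal O = \s{0011,0110,1100,1001}$; the set $\s{x:x\da x=0001}$ splits into two $\omega$-orbits $\mathcal O_1 = \s{0001,0010,0100,1000}$ and $\mathcal O_2 = \s{0111,1110,1101,1011}$, swapped by the translation $x\mapsto x+1111$; and the $\sqrt 2$-preimage of $1111$ in $\Zbw$ is exactly $\mathcal O$.

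If $N = 0000$ then $x_1, x_2$ are already reducible by Lemma~\ref{lem-reducible}, so $u$ has denominator exponent $\le k-1$ and the empty product works.  If $N = 1010$, or if $N = 0001$ with $x_1, x_2$ in the same orbit, then $x_2 = \omega^m x_1$ for some $m\in\s{0,1,2,3}$, and I put $v = HT^{(4-m)\bmod 4}u$.  Since the $k$-residue of $T^{(4-m)\bmod 4}u$ is $(x_1,x_1)$, each numerator $t_1 \pm \omega^{(4-m)\bmod 4} t_2$ reduces to $0$ in $\Zbw$ and so lies in $2\Zw$; each component of $\sqrt 2^k v$ then has the form $\sqrt 2 \cdot s$ with $s\in\Zw$, whose residue is $1010\cdot\rho(s)$.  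A direct computation in $\Zbw$ shows that $1010\cdot z$ always has the symmetric form $(p,q,p,q)$, which is reducible; hence $\rho_k(v)$ is reducible and $v$ has denominator exponent $\le k-1$ by Lemma~\ref{lem-reducible2}.

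The heart of the argument is the case $N = 0001$ with $x_1, x_2$ in different orbits; here no single round $HT^j$ can make $x_1 + \omega^j x_2 = 0$ in $\Zbw$, because that would require $x_1, x_2$ in a common orbit.  The key observation is that there is a unique $j\in\s{0,1,2,3}$ with $x_1 + \omega^j x_2 = 1111$: since $x_1 + 1111$ lies in the orbit opposite to $x_1$, namely the orbit of $x_2$, exactly one $\omega^j x_2$ equals it.  For this $j$, reducibility of $1111$ gives that $HT^j u$ has denominator exponent $\le k$; its first $k$-residue $y_1$ lies in $\s{y : \sqrt 2\,y = 1111} = \mathcal O$, while its second satisfies $y_1 + y_2 = \rho(\sqrt 2\, t_1) = 1010\cdot x_1 \in \s{1010, 0101}$, and since addition by either $1010$ or $0101$ permutes $\mathcal O$, $y_2\in\mathcal O$ as well.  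Thus $HT^j u$ has denominator exponent exactly $k$ and satisfies the hypothesis of the lemma with $N = 1010$, so one further application $T^{j'}H$ as in the previous paragraph reduces the denominator exponent to $k-1$.  The main obstacle is precisely this two-step plan and the verification that $HT^j u$ lands squarely in the Case-$\mathcal O$ single-orbit situation; it rests on the arithmetic coincidence $\sqrt 2^{-1}(1111) = \mathcal O$ read off from Table~\ref{tab-residue}.
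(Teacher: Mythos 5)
Your proposal is correct and follows essentially the same route as the paper's proof: the same three-way case split on $x\da x$, the single-orbit cases handled by one application of $HT^m$, and the cross-orbit case handled by first producing the residue $1111$ and then observing that its $\sqrt2$-preimage is the orbit $\s{0011,0110,1100,1001}$, which reduces to the previous case. The only differences are cosmetic (you verify reducibility of $\rho_k(v)$ directly rather than via twice-reducibility of $\rho_k(\sqrt2\,v)$, and your $y_1+y_2$ computation is a redundant second confirmation that $y_2$ lies in the orbit).
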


\begin{proof}
  It can be seen from Table~\ref{tab-residue} that $x_1\da x_1$ is
  either $0000$, $1010$, or $0001$.
  \begin{itemize}
  \item Case 1: $x_1\da x_1 = x_2\da x_2 = 0000$. In this case,
    $\rho_k(u)$ is already reducible, and there is nothing to show.
  \item Case 2: $x_1\da x_1 = x_2\da x_2 = 1010$. In this case, we know
    from Table~\ref{tab-residue} that $x_1,x_2\in\s{0011,
      0110, 1100, 1001}$. In particular, $x_1$ is a cyclic permutation
    of $x_2$, say, $x_1=\omega^m x_2$. Let $v=HT^mu$. Then
    \[ \begin{split}
      \rho_k(\sqrt{2}\,v) &=
    \rho_k(\zmatrix{cc}{1&1\\1&-1}\zmatrix{cc}{1&0\\0&\omega^m}\zmatrix{c}{u_1\\u_2})
    \\&= \rho_k\zmatrix{c}{u_1+\omega^m u_2\\ u_1-\omega^m u_2} \\&=
    \zmatrix{c}{x_1+\omega^m x_2\\x_1-\omega^m x_2} =
    \zmatrix{c}{0000\\0000}.
    \end{split}
    \]
    This shows that $\rho_k(\sqrt{2}\,v)$ is twice reducible;
    therefore, $\rho_k(v)$ is defined and reducible as claimed.
  \item Case 3: $x_1\da x_1 = x_2\da x_2 = 0001$. In this case, we
    know from Table~\ref{tab-residue} that
    $x_1,x_2\in\s{0001,0010,0100,1000}\cup\s{0111,1110,1101,1011}$.
    If both $x_1,x_2$ are in the first set, or both are in the second
    set, then $x_1$ and $x_2$ are cyclic permutations of each other,
    and we proceed as in case 2. The only remaining cases are that
    $x_1$ is a cyclic permutation of $0001$ and $x_2$ is a cyclic
    permutation of $0111$, or vice versa. But then there exists some
    $m$ such that $x_1+\omega^mx_2=1111$.
    Letting $u'=HT^mu$, we have
    \[ \begin{split}
    \rho_k(\sqrt{2}\,u') &=
    \rho_k(\zmatrix{cc}{1&1\\1&-1}\zmatrix{cc}{1&0\\0&\omega^m}\zmatrix{c}{u_1\\u_2})
    \\&= \rho_k\zmatrix{c}{u_1+\omega^m u_2\\ u_1-\omega^m u_2} \\&=
    \zmatrix{c}{x_1+\omega^m x_2\\x_1-\omega^m x_2} =
    \zmatrix{c}{1111\\1111}.
    \end{split}
    \]
    Since this is reducible, $u'$ has denominator exponent $k$. Let
    $\rho_k(u')=(y_1,y_2)$. Because
    $\sqrt{2}\,y_1=\sqrt{2}\,y_2=1111$, we see from
    Table~\ref{tab-residue} that $y_1,y_2\in\s{0011,0110,1100,1001}$
    and $y_1\da y_1=y_2\da y_2=1010$. Therefore, $u'$ satisfies the
    condition of case 2 above. Proceeding as in case 2, we find $m'$
    such that $v=HT^{m'} u'=HT^{m'} HT^mu$ has denominator exponent
    $k-1$. This finishes the proof.\qedhere
  \end{itemize}
\end{proof}

\begin{lemma}[Column lemma]\label{lem-column}
  Consider a unit vector $u\in\Dw^n$, i.e., an $n$-dimensional column
  vector of norm 1 with entries from the ring $\Dw$. Then there exist
  a sequence $U_1,\ldots,U_h$ of one- and two-level unitary matrices
  of types $X$, $H$, $T$, and $\omega$ such that $U_1\cdots U_hu = e_1$,
  the first standard basis vector.
\end{lemma}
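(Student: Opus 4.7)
The plan is induction on the least denominator exponent $k$ of $u$. For the base case $k=0$, we have $u\in\Zw^n$, and since $\norm{u}^2=1$ is an integer, Lemma~\ref{lem-ring-norm} gives $\sum_j\weight{u_j}^2=1$. Each $\weight{u_j}^2$ is a non-negative integer, so exactly one coordinate $u_j$ has weight $1$ and all others vanish. The weight-$1$ elements of $\Zw$ are precisely the eighth roots of unity $\omega^m$ (since $a^2+b^2+c^2+d^2=1$ forces exactly one of $a,b,c,d$ to be $\pm 1$), so after using a two-level matrix of type $X$ to swap $u_j$ into position $1$ and then applying a one-level matrix of type $\omega$ at position $1$ the correct number of times, the vector becomes $e_1$.

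For the inductive step with $k\geq 1$, the aim is to drive every coordinate's $k$-residue to something reducible by repeated application of Lemma~\ref{lem-row}, which will cut the overall denominator exponent by at least $1$. The key ingredient is a parity constraint on the $k$-residues coming from the normalization: $\norm{u}^2=1$ gives $\sum_j (\sqrt{2}^k u_j)\da(\sqrt{2}^k u_j)=2^k$ in $\Zw$, and since $\rho(2^k)=0000$ for $k\geq 1$, passing to residues yields
\begin{equation*}
  \sum_j \rho_k(u_j)\da \rho_k(u_j) = 0000 \quad \text{in } \Zbw.
\end{equation*}
By inspection of Table~\ref{tab-residue}, each summand lies in $\s{0000,0001,1010}$, and since $0001$ and $1010$ are linearly independent in $\Zbw$, the number of coordinates with $\rho_k(u_j)\da\rho_k(u_j)=0001$ must be even, and the same holds for $1010$.

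I would then pair up the coordinates with matching non-zero norm-residue ($0001$ with $0001$, and $1010$ with $1010$). For each pair $(i,j)$, Lemma~\ref{lem-row} applied to the subvector $(u_i,u_j)^T$ supplies a sequence of $H$ and $T$ matrices; lifted to two-level matrices of types $H$ and $T$ at positions $(i,j)$, they replace that pair by one with reducible $k$-residues without perturbing any other coordinate. After processing every pair, every coordinate has reducible $k$-residue, so the denominator exponent of the whole vector is at most $k-1$, and the inductive hypothesis completes the argument.

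The main obstacle is the parity-of-counts observation: it is what guarantees that the irreducible $k$-residues can always be exhausted by pairwise application of Lemma~\ref{lem-row}. The remainder is essentially bookkeeping, namely lifting the $2\times 2$ row operations of Lemma~\ref{lem-row} into the two-level matrix format demanded by the statement and connecting the base case smoothly to the inductive step.
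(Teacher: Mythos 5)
Your proposal is correct and follows essentially the same route as the paper: induction on the least denominator exponent, the weight argument via Lemma~\ref{lem-ring-norm} in the base case, and the evenness-of-counts argument on norm-residues to pair up irreducible coordinates for Lemma~\ref{lem-row} in the inductive step. The only cosmetic difference is that you pair all matching coordinates at once, whereas the paper phrases this as an inner induction on the number of irreducible components; both are valid.
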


\begin{proof}
  The proof is by induction on $k$, the least denominator exponent of
  $u$. Let $u=(u_1,\ldots,u_n)^T$.
  \begin{itemize}
  \item Base case. Suppose $k=0$. Then $u\in\Zw^n$. Since by
    assumption $\norm{u}^2=1$, it follows by Lemma~\ref{lem-ring-norm}
    that $\weight{u}^2=1$.  Since $u_1,\ldots,u_n$ are elements of
    $\Zw$, their weights are non-negative integers. It follows that
    there is precisely one $\jay$ with $\weight{u_\jay}=1$, and
    $\weight{u_\ell}=0$ for all $\ell\neq \jay$. Let
    $u'=X\level{1,\jay}u$ if $\jay\neq 1$, and $u'=u$ otherwise. Now
    $u'_1$ is of the form $\omega^{-m}$, for some $m\in\s{0,\ldots,7}$,
    and $u'_\ell=0$ for all $\ell\neq 1$. We have
    $\omega^m\level{1}u'=e_1$, as desired.
  \item Induction step. Suppose $k>0$. Let $v=\sqrt{2}^ku\in\Zw^n$,
    and let $x=\rho_k(u) = \rho(v)$.  From $\norm{u}^2=1$, it follows
    that $\norm{v}^2 = v_1\da v_1+\ldots+v_n\da v_n = 2^k$. Taking
    residues of the last equation, we have
    \begin{equation}
      x_1\da x_1+\ldots+x_n\da x_n = 0000.
    \end{equation}
    It can be seen from Table~\ref{tab-residue} that each summand
    $x_\jay\da x_\jay$ is either $0000$, $0001$, or $1010$. Since their sum
    is $0000$, it follows that there is an even number of $\jay$ such
    that $x_\jay\da x_\jay=0001$, and an even number of $\jay$ such that
    $x_\jay\da x_\jay=1010$. 

    We do an inner induction on the number of irreducible components
    of $x$. If $x$ is reducible, then $u$ has denominator exponent
    $k-1$ by Corollary~\ref{cor-reducible3}, and we can apply the
    outer induction hypothesis. Now suppose there is some $\jay$ such
    that $x_\jay$ is irreducible; then $x_\jay\da x_\jay\neq 0000$ by
    Lemma~\ref{lem-reducible}. Because of the evenness property noted
    above, there must exist some $\ell\neq \jay$ such that $x_\jay\da
    x_\jay=x_\ell\da x_\ell$. Applying Lemma~\ref{lem-row} to
    $u'=(u_\jay,u_\ell)^T$, we find a sequence $\vec U$ of row
    operations of types $H$ and $T$, making $\rho_k(\vec Uu')$
    reducible. We can lift this to a two-level operation $\vec
    U\level{\jay,\ell}$ acting on $u$; thus $\rho_k(\vec
    U\level{\jay,\ell}u)$ has fewer irreducible components than
    $x=\rho_k(u)$, and the inner induction hypothesis applies.\qedhere
  \end{itemize}
\end{proof}

\begin{lemma}[Matrix decomposition]\label{lem-matrix-decomposition}
  Let $U$ be a unitary $n\times n$-matrix with entries in $\Dw$. Then
  there exists a sequence $U_1,\ldots,U_h$ of one- and two-level
  unitary matrices of types $X$, $H$, $T$, and $\omega$ such that
  $U=U_1\cdots U_h$.
\end{lemma}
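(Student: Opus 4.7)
The plan is the standard reduction of a unitary matrix to the identity, one column at a time, by induction on $n$. The base case $n=1$ follows from the Column Lemma applied to the one-component vector $U$ itself: since $U$ is unitary with entry in $\Dw$, Lemma~\ref{lem-column} yields a sequence of one-level matrices of type $\omega$ whose product sends $U$ to $1$, and since $\omega\da = \omega^7$, inverting the sequence expresses $U$ itself as a product of one-level matrices of type $\omega$.

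For the inductive step with $n\geq 2$, let $u_1$ be the first column of $U$, which is a unit vector in $\Dw^n$. By Lemma~\ref{lem-column}, there exist one- and two-level matrices $V_1,\ldots,V_m$ of the permitted types with $V_1\cdots V_m u_1 = e_1$. The matrix $V' := V_1\cdots V_m U$ is then unitary with first column $e_1$; its first row must therefore also be $e_1^T$, since the columns of $V'$ form an orthonormal basis and so must each be orthogonal to $e_1$. Hence $V'$ has the block form
\[
V' = \zmatrix{cc}{1 & 0 \\ 0 & U'}
\]
for some $(n-1)\times(n-1)$ unitary matrix $U'$, whose entries lie in $\Dw$ because $\Dw$ is a ring containing all the generators $X,H,T,\omega$ and hence closed under forming $V'$. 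The induction hypothesis applied to $U'$ produces a decomposition $U' = W_1\cdots W_q$ into one- and two-level matrices of types $X,H,T,\omega$; lifting each $W_j$ to act trivially on the first coordinate preserves both its type and its level, giving $V' = \tilde W_1\cdots \tilde W_q$, so that
\[
U = V_m\da\cdots V_1\da\,\tilde W_1 \cdots \tilde W_q.
\]

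Finally, each adjoint $V_i\da$ is itself already a product of the permitted generators: $X\da = X$ and $H\da = H$ are self-inverse, while $T\da = T^7$ and $\omega\da = \omega^7$ each decompose into seven copies of the corresponding generator. Splicing these expansions into the expression above yields the desired presentation of $U$ as a product of one- and two-level matrices of types $X,H,T,\omega$.

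The Column Lemma carries essentially all of the arithmetic difficulty, so this lemma reduces to a routine induction; the only real point to verify is the unitarity-to-block-form step and the fact that the bottom-right block $U'$ really stays in $\Dw$, both of which are immediate from the setup.
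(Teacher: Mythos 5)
Your proof is correct and follows essentially the same route as the paper: apply the Column Lemma to clear the first column, use unitarity to get the block form, and recurse on the lower-right block. The paper phrases this as reducing $U$ to $I$ by row operations and inverting at the end, whereas you invert at each inductive step and explicitly note that $X\da$, $H\da$, $T\da=T^7$, $\omega\da=\omega^7$ stay within the allowed types — a detail the paper leaves implicit — but the substance is identical.
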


\begin{proof}
  Equivalently, it suffices to show that there exist one- and
  two-level unitary matrices $V_1,\ldots,V_h$ of types $X$, $H$, $T$,
  and $\omega$ such that $V_h\cdots V_1U=I$.  This is an easy
  consequence of the column lemma, exactly as in
  e.g. {\cite[Sec.~4.5.1]{Nielsen-Chuang}}. Specifically, first use
  the column lemma to find suitable one- and two-level row operations
  $V_1,\ldots,V_{h_1}$ such that the leftmost column of $V_{h_1}\cdots
  V_1 U$ is $e_1$. Because $V_{h_1}\cdots V_1 U$ is unitary, it is of
  the form
  \[ \zmatrix{c|c}{1&0\\\hline \rule{0mm}{2.1ex}0&U'}.
  \]
  Now recursively find row operations to reduce $U'$ to the identity
  matrix. 
\end{proof}

\begin{example}\label{exa-decomp}
  We will decompose the matrix $U$ from Example~\ref{exa-k-residue}.
  We start with the first column $u$ of $U$:
	\[\begin{array}{c}
          \displaystyle
    u=\frac{1}{\sqrt{2}^3}\zmatrix{c}{
    -\omega^3+\omega -1\\
    \omega^2+\omega \\
    \omega^3+\omega^2\\
    -1
    }, \\\\[-1.5ex]
    \rho_3(u) = \zmatrix{c}{
    1011\\
    0110\\
    1100\\
    0001
    },\quad \rho_3(u_\jay\da u_\jay) = \zmatrix{c}{
    0001\\
    1010 \\
    1010\\
    0001
    }.
    \end{array}
    \]
	Rows 2 and 3 satisfy case 2 of Lemma~\ref{lem-row}. As they are not aligned, first apply 
	$T_{[2,3]}^3$ and then $H_{[2,3]}$. Rows 1 and 4 satisfy case 3. Applying $H_{[1,4]} T_{[1,4]}^2$, 
	the residues become
	$\rho_3(u'_1)=0011$ and $\rho_3(u'_4)=1001$, which requires applying $H_{[1,4]} T_{[1,4]}$. 
	We now have 
	\[\begin{array}{@{}c@{}}
          \displaystyle 
          H_{[1,4]} T_{[1,4]} H_{[1,4]} T_{[1,4]}^2 H_{[2,3]} T_{[2,3]}^3 u=
	  v=\frac{1}{\footnotesize\sqrt{2}^2}\small\zmatrix{c}{
    0 \\
		0 \\
    \omega^2{+}\omega  \\
    -\omega {+}1 
    }, \\\\[-1.5ex]
    \rho_2(v) = \zmatrix{c}{
    0000\\
    0000\\
    0110\\
    0011
    },\quad \rho_2(v_\jay\da v_\jay) = \zmatrix{c}{
    0000\\
    0000\\
    1010\\
    1010
    }.\end{array}\]
	Rows 3 and 4 satisfy case 2, while rows 1 and 2 are already reduced. We reduce rows 3 and 4 by
	applying $H_{[3,4]} T_{[3,4]}$.
	Continuing, the first column is completely reduced to $e_1$ by further applying 
	$\omega_{[1]}^7 X_{[1,4]} H_{[3,4]} T_{[3,4]}^3$.
	The  complete decomposition of $u$ is therefore given by 
	\[\begin{split}W_1={}&\omega_{[1]}^7  X_{[1,4]} H_{[3,4]} T_{[3,4]}^3
		H_{[3,4]} T_{[3,4]}\\&
		H_{[1,4]} T_{[1,4]} H_{[1,4]} T_{[1,4]}^2 H_{[2,3]} T_{[2,3]}^3.\end{split}\]
	Applying this to the original matrix $U$, we have $W_1U=$
	\[\small \frac{\small 1}{\small\sqrt{2}^3}\footnotesize\zmatrix{cccc}{
    \sqrt{2}^3             & 0      & 0                           & 0\\
    0 & \omega^3{-}\omega^2{+}\omega{+}1  & {-}\omega^2{-}\omega{-}1          & \omega^2\\
    0 & 0                           & \omega^3{+}\omega^2{-}\omega{+}1  & \omega^3{+}\omega^2{-}\omega{-}1\\
    0 & \omega^3{+}\omega^2{+}\omega{+}1 & \omega^2                    & \omega^3{-}\omega^2{+}1
    }.
\]
  Continuing with the rest of the columns, we 
  find $W_2 = \omega_{[2]}^6 H_{[2,4]} T_{[2,4]}^3 H_{[2,4]} T_{[2,4]}$,
  $W_3 = \omega_{[3]}^4 H_{[3,4]} T_{[3,4]}^3 H_{[3,4]}$, and
  $W_4=\omega_{[4]}^5$. We then have $U = W_1\da\, W_2\da\, W_3\da\,
  W_4\da$, or explicitly:
\[\begin{split} U ={}&  T_{[2,3]}^5 H_{[2,3]} T_{[1,4]}^6 H_{[1,4]} T_{[1,4]}^7 H_{[1,4]}\\&
					T_{[3,4]}^7 H_{[3,4]} 
					T_{[3,4]}^5 H_{[3,4]}
                                        X_{[1,4]} \omega_{[1]}\\&
  T_{[2,4]}^7 H_{[2,4]} T_{[2,4]}^5 H_{[2,4]} \omega_{[2]}^2
  H_{[3,4]} T_{[3,4]}^5 H_{[3,4]} \omega_{[3]}^4
  \omega_{[4]}^3.\end{split}\]
\end{example}

\section{Proof of Theorem~\ref{thm-main}}

\subsection{Equivalence of (a) and (b)}

First note that, since all the elementary Clifford+$T$ gates, as shown
in (\ref{eqn-generators}), take their matrix entries in
$\Dw=\Z[\frac1{\sqrt{2}},i]$, the implication (a) $\imp$ (b) is
trivial. For the converse, let $U$ be a unitary $2^n\times 2^n$ matrix
with entries from $\Dw$. By Lemma~\ref{lem-matrix-decomposition}, $U$
can be decomposed into one- and two-level matrices of types $X$, $H$,
$T$, and $\omega$. It is well-known that each such matrix can be
further decomposed into controlled-not gates and multiply-controlled
$X$, $H$, $T$, and $\omega$-gates, for example using Gray codes
{\cite[Sec.~4.5.2]{Nielsen-Chuang}}. But all of these gates have
well-known exact representations in Clifford+$T$ with ancillas, see
e.g. {\cite[Fig.~4(a) and Fig.~9]{AMMR12}} (and noting that a
controlled-$\omega$ gate is the same as a $T$-gate). This finishes the
proof of (b) $\imp$ (a).

\subsection{One ancilla is sufficient}

The final claim that needs to be proved is that a circuit for $U$ can
always be found using at most one ancilla. It is already known that
for $n>1$, an ancilla is sometimes necessary
{\cite{Kliuchnikov-et-al}}. To show that a single ancilla is
sufficient, in light of the above decomposition, it is enough to show
that the following can be implemented with one ancilla:
\begin{enumerate}\alphalabels
\item a multiply-controlled $X$-gate;
\item a multiply-controlled $H$-gate;
\item a multiply-controlled $T$-gate.
\end{enumerate}
We first recall from {\cite[Fig.~4(a)]{AMMR12}} that a
singly-controlled Hadamard gate can be decomposed into Clifford+$T$
gates with no ancillas:
\[ \m{\begin{qcircuit}[scale=0.5] 
    \grid{2}{0,1};
    \controlled{\gate{$H$}}{1,0}{1};
  \end{qcircuit}}
=\!\!\!
  \m{\begin{qcircuit}[scale=0.5]
    \grid{11}{0,1};
    \gate{$S$}{1,0};
    \gate{$H$}{2.5,0};
    \gate{$T$}{4,0};
    \controlled{\notgate}{5.5,0}{1};
    \widegate{$T\da$}{.55}{7,0};
    \gate{$H$}{8.5,0};
    \widegate{$S\da$}{.55}{10,0};
  \end{qcircuit}.}
\]
We also recall that an $n$-fold controlled
$iX$-gate can be represented using $O(n)$ Clifford+$T$ gates with no
ancillas. Namely, for $n=1$, we have
\[
\m{\begin{qcircuit}[scale=0.5]
    \grid{2}{1,2};
    \controlled{\widegate{$iX$}{.65}}{1,1}{2};
  \end{qcircuit}}
  =\!\!\!
  \m{\begin{qcircuit}[scale=0.5]
    \grid{3.5}{1,2};
    \gate{$S$}{1,2};
    \controlled{\notgate}{2.5,1}{2};
  \end{qcircuit},}
\]
and for $n\geq 2$, we can use
\[
    \m{\begin{qcircuit}[scale=0.5]
    \grid{2}{1,2,3,4,5};
    \draw(0.4,2.7) node[anchor=center]{$\vdots$};
    \draw(0.4,4.7) node[anchor=center]{$\vdots$};
    \controlled{\widegate{$iX$}{.65}}{1,1}{2,3,4,5};
    \draw(1.6,2.7) node[anchor=center]{$\vdots$};
    \draw(1.6,4.7) node[anchor=center]{$\vdots$};
  \end{qcircuit}}
  =\!\!\!
  \m{\begin{qcircuit}[scale=0.5]
    \gridx{0.1}{13.5}{1,2,3,4,5};
    \gate{$H$}{1.1,1};
    \widegate{$T\da$}{.55}{2.5,1};
    \controlled{\notgate}{3.75,1}{2,3};
    \gate{$T$}{5,1};
    \controlled{\notgate}{6.25,1}{4,5};
    \widegate{$T\da$}{.55}{7.5,1};
    \controlled{\notgate}{8.75,1}{2,3};
    \gate{$T$}{10,1};
    \controlled{\notgate}{11.25,1}{4,5};
    \gate{$H$}{12.5,1};
    \draw(3.25,2.7) node[anchor=center]{$\vdots$};
    \draw(4.25,2.7) node[anchor=center]{$\vdots$};
    \draw(5.75,4.7) node[anchor=center]{$\vdots$};
    \draw(6.75,4.7) node[anchor=center]{$\vdots$};
    \draw(8.25,2.7) node[anchor=center]{$\vdots$};
    \draw(9.25,2.7) node[anchor=center]{$\vdots$};
    \draw(10.75,4.7) node[anchor=center]{$\vdots$};
    \draw(11.75,4.7) node[anchor=center]{$\vdots$};
  \end{qcircuit},}
\]
with further decompositions of the multiply-controlled not-gates as in
{\cite[Lem.~7.2]{Barenco-etal-1995}} and
{\cite[Fig.~4.9]{Nielsen-Chuang}}.  We then obtain the following
representations for (a)--(c), using only one ancilla:
\[
(a)
  \m{\begin{qcircuit}[scale=0.47]
    \grid{2}{0.5,2,3};
    \draw(0.4,2.7) node[anchor=center]{$\vdots$};
    \controlled{\gate{$X$}}{1,0.5}{2,3};
    \draw(1.6,2.7) node[anchor=center]{$\vdots$};
  \end{qcircuit}}
  =\!\!
  \m{\begin{qcircuit}[scale=0.47]
    \grid{8.5}{0,2,3};
    \gridx{1}{7.5}{1};
    \draw(1,2.7) node[anchor=center]{$\vdots$};
    \init{$0$}{1,1};
    \controlled{\widegate{$iX$}{.75}}{2.5,1}{2,3};
    \controlled{\gate{$X$}}{4.25,0}{1};
    \controlled{\widegate{$-iX$}{.75}}{6,1}{2,3};
    \term{$0$}{7.5,1};
    \draw(7.5,2.7) node[anchor=center]{$\vdots$};
  \end{qcircuit}}
\]\[
(b)
  \m{\begin{qcircuit}[scale=0.47]
    \grid{2}{0.5,2,3};
    \draw(0.4,2.7) node[anchor=center]{$\vdots$};
    \controlled{\gate{$H$}}{1,0.5}{2,3};
    \draw(1.6,2.7) node[anchor=center]{$\vdots$};
  \end{qcircuit}}
  =\!\!
  \m{\begin{qcircuit}[scale=0.47]
    \grid{8.5}{0,2,3};
    \gridx{1}{7.5}{1};
    \draw(1,2.7) node[anchor=center]{$\vdots$};
    \init{$0$}{1,1};
    \controlled{\widegate{$iX$}{.75}}{2.5,1}{2,3};
    \controlled{\gate{$H$}}{4.25,0}{1};
    \controlled{\widegate{$-iX$}{.75}}{6,1}{2,3};
    \term{$0$}{7.5,1};
    \draw(7.5,2.7) node[anchor=center]{$\vdots$};
  \end{qcircuit}}
\]\[
(c)
  \m{\begin{qcircuit}[scale=0.47]
    \grid{2}{0.5,2,3};
    \draw(0.4,2.7) node[anchor=center]{$\vdots$};
    \controlled{\gate{$T$}}{1,0.5}{2,3};
    \draw(1.6,2.7) node[anchor=center]{$\vdots$};
  \end{qcircuit}}
  =\!\!\!
  \m{\begin{qcircuit}[scale=0.47]
    \grid{8.5}{1,2,3};
    \gridx{1}{7.5}{0};
    \draw(1,2.7) node[anchor=center]{$\vdots$};
    \init{$0$}{1,0};
    \controlled{\widegate{$iX$}{.75}}{2.5,0}{1,2,3};
    \gate{$T$}{4.25,0}{1};
    \controlled{\widegate{$-iX$}{.75}}{6,0}{1,2,3};
    \term{$0$.}{7.5,0};
    \draw(7.5,2.7) node[anchor=center]{$\vdots$};
  \end{qcircuit}}
\]

\begin{remark}
  The fact that one ancilla is always sufficient in
  Theorem~\ref{thm-main} is primarily of theoretical interest. In
  practice, one may assume that on most quantum computing
  architectures, ancillas are relatively cheap. Moreover, the use of
  additional ancillas can significantly reduce the size and depth
  of the generated circuits (see e.g.~\cite{Selinger-toffoli}).
\end{remark}

\section{The no-ancilla case}

\begin{lemma}\label{lem-det1}
  Under the hypotheses of Theorem~\ref{thm-main}, assume that $\det
  U=1$. Then $U$ can be exactly represented by a Clifford+$T$ circuit
  with no ancillas.
\end{lemma}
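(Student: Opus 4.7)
The plan is to leverage Lemma~\ref{lem-matrix-decomposition} to write $U=W_1\cdots W_h$ as a product of one- and two-level matrices of types $X$, $H$, $T$, and $\omega$, and then implement the sequence as an ancilla-free Clifford+$T$ circuit, using $\det U = 1$ to eliminate the single ancilla needed in the general proof of Theorem~\ref{thm-main}. The determinants of the individual $W_i$ are $-1$ (for $X$- and $H$-type gates) or $\omega$ (for $T$- and $\omega$-type gates), so the constraint $\prod_i \det W_i = 1$ forces the total $X/H$-count to be even and the total $T/\omega$-exponent to be divisible by~$8$.

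Via Gray coding, each two-level $W_i$ can be rewritten as a product of CNOTs, generalized Toffolis $C^{n-1}(X)$, and a single multi-controlled gate of type $G\in\{X,H,T,\omega\}$. The $C^{n-1}(X)$ case is ancilla-free Clifford+$T$ by Barenco et al.~\cite{Barenco-etal-1995}, and $C(H)$ is ancilla-free by the identity already shown in the paper. The problematic gates are $C^{n-1}(H)$, $C^{n-1}(T)$, and $C^{n-1}(\omega)$ for large $n$: their determinants, $-1$, $\omega$, and $\omega$ respectively, lie outside the determinant subgroup of the ancilla-free Clifford+$T$ group for $n\geq 4$, so these gates are not individually ancilla-free.

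The key observation is that the det-$1$ hypothesis forces these individually-non-ancilla-free contributions to combine into ancilla-free blocks. Because the $T/\omega$-type gates have total multiplicity divisible by~$8$, multi-controlled $T$-gates can be paired: two with the same controls form a multi-controlled $S$ (Clifford), four form a multi-controlled $Z$, and eight cancel, and all of these are ancilla-free. Multi-controlled diagonal phase gates commute with each other, so the $T$- and $\omega$-contributions can be rearranged into a single diagonal unitary with $\det = 1$, which is realizable ancilla-free via a CNOT/$T$ parity network. The $C^{n-1}(H)$ gates appear in even numbers; pairs on the same controls cancel outright, while pairs on different controls combine (via commutation with Cliffords and identities among $H$-conjugates) into ancilla-free Clifford blocks.

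The main obstacle is making this pairing/rearrangement explicit when the $T$-, $\omega$-, and $H$-type gates are distributed across many different basis-state targets and do not in general commute with each other. I expect the cleanest resolution is either to modify the column-reduction algorithm of Lemma~\ref{lem-matrix-decomposition} itself so that, under $\det U = 1$, the resulting sequence of two-level gates is naturally ``det-paired'', or to separately decompose the diagonal and non-diagonal parts of $U$ and invoke standard phase-polynomial synthesis for the diagonal part. Either way, the central mechanism is the same: the $\det U = 1$ condition guarantees that the net ``phase overhead'' incurred by the individually-non-ancilla-free multi-controlled gates cancels, removing the need for an ancilla.
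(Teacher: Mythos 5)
There is a genuine gap here. You correctly identify that $\det U=1$ must be what eliminates the ancilla, and you even name (as a fallback) the strategy that actually works, but the mechanism you commit to --- globally pairing and rearranging the individually-non-ancilla-free multi-controlled $H$, $T$, and $\omega$ gates --- is exactly the step you admit you cannot carry out, and it does not go through as described. Two $C^{n-1}(H)$ gates acting on different target pairs do not combine into a Clifford block, and commuting the diagonal $T$/$\omega$ contributions into a single phase polynomial requires pushing them past the non-diagonal $H$- and $X$-type two-level operations, which alters them. (Also, as a side point, $\det U=1$ gives the single congruence $4a+b\equiv 0\pmod 8$, where $a$ is the $X/H$-count and $b$ the total $T/\omega$-exponent; it does not separately force $a$ even and $8\mid b$.)

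The paper's proof avoids all global bookkeeping by making a purely \emph{local} modification to the decomposition: in Lemma~\ref{lem-row}, every $HT^m$ is replaced by $T^{-m}(iH)T^m$; in the base case of Lemma~\ref{lem-column}, $X\level{1,\jay}$ is replaced by $iX\level{1,\jay}$ and $\omega\level{1}$ by the two-level $W\level{1,2}$ with $W=\mathrm{diag}(\omega,\omega^{-1})$. Each of these replacement two-level operators has determinant $1$, and --- this is the insight your proposal is missing --- each of their \emph{multiply-controlled} versions has an ancilla-free Clifford+$T$ circuit, built around the ancilla-free multiply-controlled $iX$ construction already given in the paper (whereas multiply-controlled $X$, $H$, $T$ individually do not, for $n\geq 4$, precisely because of their determinants). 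The hypothesis $\det U=1$ is then invoked only once, at the very end, to conclude that the single residual one-level phase $\omega^m\level{1}$ must equal the identity. So the determinant condition is not used to orchestrate cancellations among many gates; it merely kills one leftover scalar. If you want to salvage your write-up, take your own suggested alternative seriously: modify the column-reduction algorithm itself so that every factor it emits already has determinant $1$ and an ancilla-free implementation.
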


\begin{proof}
  This requires only minor modifications to the proof of
  Theorem~\ref{thm-main}. First observe that whenever an operator of
  the form $HT^m$ was used in the proof of Lemma~\ref{lem-row}, we can
  instead use $T^{-m}(iH)T^m$ without altering the rest of the
  argument. In the base case of Lemma~\ref{lem-column}, the operator
  $X\level{1,\jay}$ can be replaced by $iX\level{1,\jay}$. Also, in the
  base case of Lemma~\ref{lem-column}, whenever $n\geq 2$, the
  operator $\omega\level{1}$ can be replaced by $W\level{1,2}$, where
  \[ W = \zmatrix{cc}{\omega&0\\0&\omega^{-1}}.
  \]
  Therefore, the decomposition of Lemma~\ref{lem-matrix-decomposition}
  can be performed so as to yield only two-level matrices of types
  \begin{equation}\label{eqn-det1}
    iX,\quad T^{-m}(iH)T^m,\quad \mbox{and}\ W,
  \end{equation}
  plus at most one one-level matrix of type $\omega^m$. But since
  all two-level matrices of types (\ref{eqn-det1}), as well as $U$
  itself, have determinant 1, it follows that $\omega^m = 1$. We
  finish the proof by observing that the multiply-controlled operators
  of types (\ref{eqn-det1}) possess ancilla-free Clifford+$T$
  representations, with the latter two given by
  \[
  \m{\begin{qcircuit}[scale=0.47]
      \grid{5}{1,2,3};
      \draw(1,2.7) node[anchor=center]{$\vdots$};
      \controlled{\widegate{$T^{-m}(iH)T^m$}{2}}{2.5,1}{2,3};
      \draw(4,2.7) node[anchor=center]{$\vdots$};
    \end{qcircuit}}
  =\!\!
  \m{\begin{qcircuit}[scale=0.47]
      \grid{14.9}{1,2,3};
      \draw(1,2.7) node[anchor=center]{$\vdots$};
      \widegate{$T^m$}{.55}{1,1};
      \gate{$S$}{2.5,1};
      \gate{$H$}{4,1};
      \gate{$T$}{5.5,1};
      \controlled{\widegate{$iX$}{.75}}{7.25,1}{2,3};
      \widegate{$T\da$}{.55}{9,1};
      \gate{$H$}{10.5,1};
      \widegate{$S\da$}{.55}{12,1};
      \widegate{$T^{-m}$}{.75}{13.75,1};
      \draw(13.9,2.7) node[anchor=center]{$\vdots$};
    \end{qcircuit}}
  \]\[
  \m{\begin{qcircuit}[scale=0.47]
      \grid{2}{1,2,3};
      \draw(0.4,2.7) node[anchor=center]{$\vdots$};
      \controlled{\widegate{$W$}{.6}}{1,1}{2,3};
      \draw(1.6,2.7) node[anchor=center]{$\vdots$};
    \end{qcircuit}}
  =\!\!
  \m{\begin{qcircuit}[scale=0.47]
      \grid{7.5}{1,2,3};
      \draw(0.4,2.7) node[anchor=center]{$\vdots$};
      \controlled{\widegate{$iX$}{.75}}{1.25,1}{2,3};
      \gate{$T$}{3,1};
      \controlled{\widegate{$-iX$}{.75}}{4.75,1}{2,3};
      \widegate{$T\da$}{.55}{6.5,1};
      \draw(7.1,2.7) node[anchor=center]{$\vdots$};
    \end{qcircuit}}
  \]
\end{proof}

As a corollary, we obtain a characterization of the $n$-qubit
Clifford+$T$ group (with no ancillas) for all $n$:
  
\begin{corollary}\label{cor-noancilla}
  Let $U$ be a unitary $2^n\times 2^n$ matrix. Then the following are
  equivalent:
\begin{enumerate}\alphalabels
\item[(a)] $U$ can be exactly represented by a quantum
  circuit over the Clifford+$T$ gate set on $n$ qubits with no ancillas.
\item[(b)] The entries of $U$ belong to the ring
  $\Z[\frac1{\sqrt{2}},i]$, and:
  \begin{itemize}
    \item $\det U=1$, if $n\geq 4$;
    \item $\det U\in\s{-1,1}$, if $n=3$;
    \item $\det U\in\s{i,-1,-i,1}$, if $n=2$;
    \item $\det U\in\s{\omega,i,\omega^3,-1,\omega^5,-i,\omega^7,1}$,
      if $n\leq 1$.
    \end{itemize}
  \end{enumerate}
\end{corollary}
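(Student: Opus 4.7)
The plan is to verify (a) $\Leftrightarrow$ (b) by checking each implication separately, using Lemma~\ref{lem-det1} as the key tool for the non-trivial direction.

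For (a) $\imp$ (b), the inclusion of entries in $\Dw$ is immediate, since the Clifford+$T$ generators listed in (\ref{eqn-generators}) have entries in $\Dw$ and this ring is closed under sums, products, and tensor products. For the determinant constraint, I would compute the determinant of each generator acting as an $n$-qubit operator, i.e.\ extended by identity on the remaining qubits. A single-qubit gate $G$ applied to one of $n$ qubits contributes $(\det G)^{2^{n-1}}$, the CNOT (for $n \geq 2$) contributes $(-1)^{2^{n-2}}$, and the scalar $\omega$ contributes $\omega^{2^n}$. Substituting $\det H = -1$, $\det S = i$, $\det T = \omega$ and splitting into the cases $n=1$, $n=2$, $n=3$, and $n \geq 4$ shows that the multiplicative group generated by these values is exactly the cyclic group stated in (b). In particular, for $n \geq 4$ every exponent $2^{n-1}, 2^{n-2}, 2^n$ is divisible by $8$, so all determinants trivialize to $1$.

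For (b) $\imp$ (a), the case $\det U = 1$ is Lemma~\ref{lem-det1}. In the remaining cases the plan is to absorb the non-trivial determinant by a single power of $T$ on one qubit. Setting $V_k = T^k \otimes I^{\otimes(n-1)}$, one has $\det V_k = \omega^{k \cdot 2^{n-1}}$, which runs through all eighth roots of unity for $n = 1$, all fourth roots for $n = 2$, and $\{\pm 1\}$ for $n = 3$ as $k$ varies. Choosing $k$ so that $\det V_k = \det U$, the product $U V_k^{-1}$ has determinant $1$ and entries in $\Dw$, so Lemma~\ref{lem-det1} produces an ancilla-free Clifford+$T$ circuit $C$ for it, and $U = C V_k$ is the desired circuit.

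The main obstacle is the exponent bookkeeping in the forward direction: one must verify that the $2^{n-1}$-th and $2^{n-2}$-th powers of the generator determinants, reduced modulo the relation $\omega^8 = 1$, generate \emph{exactly} --- and no more than --- the subgroup claimed in (b). This is a short but finicky case analysis; the crucial observation is that every determinant is a power of $\omega$, and the four cases $n = 1, 2, 3, n \geq 4$ correspond precisely to the four stages at which the exponent $2^{n-1}$ becomes a multiple of $1, 2, 4, 8$ respectively. No new mathematical ideas are required beyond Lemma~\ref{lem-det1} and these routine computations.
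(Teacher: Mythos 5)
Your proof is correct and follows essentially the same route as the paper: both directions hinge on Lemma~\ref{lem-det1}, with the non-trivial direction normalizing the determinant by a power of $T\otimes I\otimes\cdots\otimes I$ (the paper's $D_n$ is exactly your $V_1$), and the forward direction being the same generator-by-generator determinant computation. The only trivial omission is the case $n=0$, where $V_k$ is undefined and one instead uses the scalar gate $\omega$ to realize the eighth roots of unity.
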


\begin{proof}
  For (a) $\imp$ (b), it suffices to note that each of the generators
  of the Clifford+$T$ group, regarded as an operation on $n$ qubits,
  satisfies the conditions in (b). For (b) $\imp$ (a), let us define
  for convenience $d_0=d_1=\omega$, $d_2=i$, $d_3=-1$, and $d_n=1$ for
  $n\geq 4$. First note that for all $n$, the Clifford+$T$ group on
  $n$ qubits (without ancillas) contains an element $D_n$ whose
  determinant is $d_n$, namely $D_n=I$ for $n\geq 4$, $D_3=T\otimes
  I\otimes I$, $D_2=T\otimes I$, $D_1=T$, and $D_0=\omega$. Now
  consider some $U$ satisfying (b). By assumption, $\det U=d_n^m$
  for some $m$. Let $U'=UD_n^{-m}$, then $\det U'=1$. By
  Lemma~\ref{lem-det1}, $U'$, and therefore $U$, is in the
  Clifford+$T$ group with no ancillas.
\end{proof}

\begin{remark}
  Note that the last condition in Corollary~\ref{cor-noancilla},
  namely that $\det U$ is a power of $\omega$ for $n\leq 1$, is of
  course redundant, as this already follows from $\det
  U\in\Z[\frac1{\sqrt{2}},i]$ and $|\det U|=1$. We stated the
  condition for consistency with the case $n\geq 2$.
\end{remark}

\begin{remark}
  The situation of Theorem~\ref{thm-main} and
  Corollary~\ref{cor-noancilla} is analogous to the case of classical
  reversible circuits. It is well-known that the not-gate,
  controlled-not gate, and Toffoli gate generate all classical
  reversible functions on $n\leq 3$ bits. For $n\geq 4$ bits, they
  generate exactly those reversible boolean functions that define an
  {\em even permutation} of their inputs (or equivalently, those that
  have determinant 1 when viewed in matrix form) {\cite{Musset97}};
  the addition of a single ancilla suffices to recover all boolean
  functions.
\end{remark}

\section{Complexity}

The proof of Theorem~\ref{thm-main} immediately yields an algorithm,
albeit not a very efficient one, for synthesizing a Clifford+$T$
circuit with ancillas from a given operator $U$. We estimate the size
of the generated circuits.

We first estimate the number of (one- and two-level) operations
generated by the matrix decomposition of
Lemma~\ref{lem-matrix-decomposition}. The row operation from
Lemma~\ref{lem-row} requires only a constant number of
operations. Reducing a single $n$-dimensional column from denominator
exponent $k$ to $k-1$, as in the induction step of
Lemma~\ref{lem-column}, requires $O(n)$ operations; therefore, the
number of operations required to reduce the column completely is
$O(nk)$.

Now consider applying Lemma~\ref{lem-matrix-decomposition} to an
$n\times n$-matrix with least denominator exponent $k$. Reducing the
first column requires $O(nk)$ operations, but unfortunately, it may
{\em increase} the least denominator exponent of the rest of the
matrix, in the worst case, to $3k$. Namely, each row operation of
Lemma~\ref{lem-row} potentially increases the denominator exponent by
$2$, and any given row may be subject to up to $k$ row operations,
resulting in a worst-case increase of its denominator exponent from
$k$ to $3k$ during the reduction of the first column.  It follows
that reducing the second
column requires up to $O(3(n-1)k)$ operations, reducing the third column
requires up to $O(9(n-2)k)$ operations, and so on. Using the identity
$\sum_{j=0}^{n-1}3^j(n-j) = (3^{n+1}-2n-3)/4$, this results in a total of
$O(3^nk)$ one- and two-level operations for
Lemma~\ref{lem-matrix-decomposition}.

In the context of Theorem~\ref{thm-main}, we are dealing with $n$
qubits, i.e., a $2^n\times 2^n$-operator, which therefore decomposes
into $O(3^{2^n}k)$ two-level operations. Using one ancilla,
each two-level operation can be decomposed into $O(n)$ Clifford+$T$
gates, resulting in a total gate count of $O(3^{2^n}\!nk)$
elementary Clifford+$T$ gates.

\section{Future work}

As mentioned in the introduction, the algorithm arising out of the
proof of Theorem~\ref{thm-main} produces circuits that are very far
from optimal. This can be seen heuristically by taking any simple
Clifford+$T$ circuit, calculating the corresponding operator, and then
running the algorithm to re-synthesize a circuit. 

Moreover, it is unlikely that the algorithm is optimal even in the
asymptotic sense. The algorithm's worst case gate count of
$O(3^{2^n}\!nk)$ is separated from information-theoretic lower bounds
by an exponential gap. Specifically, the number of different unitary
$n$-qubit operators with denominator exponent $k$ can be bounded: for
$n\geq 1$, it is between $2^{2^{n-1}k}$ and
$2^{4^n(4+2k)}$. Therefore, such an operator carries between
$\Omega(2^nk)$ and $O(4^nk)$ bits of information.  Regardless of where
the true number falls within this spectrum, the resulting
information-theoretic lower bound for the number of elementary gates
required to represent such an operator is exponential, not
super-exponential, in $n$.

While the information-theoretic analysis does of course not imply the
existence of an asymptotically better synthesis algorithm, it
nevertheless suggests that it may be worthwhile to look for one. 

Given that the gate count estimate is dominated by the term $3^{2^n}\!$,
the most obvious target for improvement is the part of the algorithm
that causes this super-exponential blowup. As noted above, this blowup
is caused by the fact that row reductions that reduce the denominator
exponent of one column might simultaneously increase the denominator
exponent of the remaining columns.

\section{Acknowledgements}

This research was supported by NSERC.

This research was supported by the Intelligence Advanced Research
Projects Activity (IARPA) via Department of Interior National Business
Center contract number D11PC20168. The U.S.\ Government is authorized
to reproduce and distribute reprints for Governmental purposes
notwithstanding any copyright annotation thereon. Disclaimer: The
views and conclusions contained herein are those of the authors and
should not be interpreted as necessarily representing the official
policies or endorsements, either expressed or implied, of IARPA,
DoI/NBC, or the U.S.\ Government.

\bibliographystyle{abbrvunsrt}
\bibliography{nqubit}

\end{document}